\newtheorem{mydef}{Definition}
\newtheorem{mylem}{Lemma}
\newtheorem{mypropo}{Proposition}
\newcommand{\F}{{\mathbb F}}
\newcommand{\Sf}{{\mathbb S}}
\newcommand{\SBox}{S-Box\xspace}
\newcommand{\SBoxes}{S-Boxes\xspace}
 \date{\today}
 \title{Generating S-Boxes from Semifields Pseudo-extensions}
\author{Jean-Guillaume Dumas \and Jean-Baptiste Orfila}
\date{\small Laboratoire J. Kuntzmann, Universit\'e de Grenoble. 51, rue des
  Math\'ematiques, umr CNRS 5224, bp 53X, F38041 Grenoble, France,
\href{mailto:Jean-Guillaume.Dumas@imag.fr,Jean-Baptiste.Orfila@imag.fr}{\{Jean-Guillaume.Dumas,Jean-Baptiste.Orfila\}@imag.fr}.
}
\begin{document}

\maketitle

\begin{abstract}
  Specific vectorial boolean functions, such as \SBoxes or APN functions have many
applications, for instance in symmetric ciphers. 
In cryptography they must satisfy some criteria (balancedness, high
nonlinearity, high algebraic degree, avalanche, or
transparency~\cite{Alvarez:2008:apn,Prouff:2005:transparency}) to provide best
possible resistance against attacks.
Functions satisfying most criteria are however difficult to find. 
Indeed, random generation does not
work~\cite{Danjean:2007:pasco,Hussain:2012:chaotic} and the \SBoxes used in the AES
or Camellia ciphers are actually variations around a single function, the inverse
function in $\F_{2^n}$. Would the latter function have an unforeseen weakness (for instance if more
practical algebraic attacks are developped), it would be desirable to have some
replacement candidates. For that matter, we propose to weaken a little bit the algebraic part of the
design of \SBoxes and use finite semifields instead of finite fields to build
such \SBoxes. Since it is not even known how
many semifields there are of order $2^8$, we propose
to build \SBoxes and APN functions via semifields pseudo-extensions of the form $\Sf^2_{2^4}$,
where $\Sf_{2^4}$ is any semifield of order $2^4$. Then, we mimic in this
structure the use of functions applied on a finite fields, such as the inverse or
the cube. We report here the construction of $12781$ non equivalent \SBoxes with with maximal nonlinearity, differential invariants, degrees
 and bit interdependency, and $2684$ APN functions.

\end{abstract}

Keywords: \SBoxes; APN functions; semifields.
\section{Introduction}

A substitution-box (denoted \SBox), is a tool used in
symmetric ciphers in order to increase their resistance against known
attacks, such as linear and differential cryptanalysis, by breaking cipher linearity.
\SBoxes are commonly represented by boolean functions i.e. $ S : \mathbb{F}_2^n
\to \mathbb{F}_2^m $ , whose dimensions $n,m$ depend on the cipher. 
For example, the AES \SBox uses $n=m=8$, views the finite field with $256$
elements as a vector space on its base field, and is generated by: 
\begin{equation}\label{eq:aes}\begin{array}{ccccc}
	T & : & \mathbb{F}_{2^8} & \to & \mathbb{F}_{2^8} \\
	& & 0 & \mapsto & 0 \\
	& & a & \mapsto & a^{-1} \\
\end{array}
\end{equation}

Once $T$ is computed, an affine transformation is
applied~\cite{Daemen:1998:rijndael}, and the result is in an excellent \SBox
from the point 
of view of security characteristics. More precisely, in the following, we will
use the list of criteria described in~\cite{Alvarez:2008:apn}. 
These criteria measure \SBoxes robustness with respect to possible attacks.
Among bijective \SBoxes, only AES and Camellia's \SBoxes have good scores on all
of these measures and both \SBoxes are build on a modified inverse computation.
Thus, would the latter function have an unforeseen weakness (for instance if
more practical algebraic attacks are developped), it would be desirable to have
some replacement candidates. 

Rather than trying different constructions, some works~\cite{Alvarez:2008:apn},
~\cite{Danjean:2007:pasco}  have been made on random searches among the $256!$
possibilities of bijective \SBoxes. 
Another approach is to design \SBoxes via the use of chaotic
maps~\cite{Hussain:2012:chaotic}.
Unfortunately, none of the \SBoxes build from these searches have the resistance
of AES against linear nor differential attacks.

Our idea is different, we replace the algebraic structure of AES and Camellia
(namely viewing the vector space as a finite field) by another structure, a
semifield. 
First, there exists different semifields of a given order up to isomorphism. 
Even when considering the more restrictive notion of isotopy~\cite{Albert:1960:finite}, the
semifields are still non unique. Therefore there could be several choices of
underlying structure, even with a single function.
Second, the nonzero elements of semifields still form a multiplicative group. 
Thus an inverse-like function could very well preserve good cryptographic
properties.
This idea could also be efficient for Almost Perfect non-Linear functions, defined as vectorial boolean function
verifying an almost perfect score to the differential invariant (see
also~\cite{Alvarez:2008:apn}, for instance). In the latter case, we will show that mimicking,
e.g., the cube function on semifields enables us to generate
APN functions.

In Section~\ref{sec:exten}, we recall the definition of semifields and propose a
construction of a degree~2 pseudo-extension of semifields of order $16$. 
From this construction we deduce bijective \SBoxes from $\F_2^8$ to $\F_2^8$, that mimic
the behavior of the above Function~(\ref{eq:aes}). 
We also present some efficient algorithms for semifields constructions in
Section~\ref{sec:gener}. 
Then we recall in Section~\ref{sec:crit}, the criteria that we use to rank the
obtained \SBoxes.
Finally, we show in Section~\ref{sec:expe} that our construction indeed yields
novel \SBoxes and APN functions that match the resistance of the best known ones.

\section{Semifields pseudo-extensions}\label{sec:exten}
In this section, after defining semifields, we describe the construction of
pseudo-extensions of degree $2$ of semifields containing $16$ elements. 

\begin{mydef}
A finite semifield $(\mathbb{S},+,\times)$ is a set $\mathbb{S}$ containing at least two elements, and associated with two binary laws (addition and multiplication), such that: 
\begin{enumerate}
	\item $(\mathbb{S}, +)$ is a group with neutral $0$
	\item $\forall a,b \in \mathbb{S}, ab=0 \Rightarrow a=0$ or $b=0$
	\item $\forall a,b,c \in \mathbb{S} : a(b+c)=ab+ac$ and $(a+b)c = ac+bc$
	\item $\forall a \in \mathbb{S}, \exists$ a neutral element for $\times$ denoted as $e$ which satisfies: $ea=ae=a$ \\
\end{enumerate}
\end{mydef}

Ideally, we would like to construct \SBoxes using:
$$\begin{array}{ccccc}
	T' & : & \mathbb{S}_{2^8} & \to & \mathbb{S}_{2^8} \\
	& & 0 & \mapsto & 0 \\
	& & a & \mapsto & a^{-1} \\
\end{array}$$

Unfortunately, we do not know the complete classification of these semifields for the moment. Currently, the 
largest classification in characteristic $2$ is of order 64~\cite{Rua:2009:classification_sf}.
Thus, in order to build \SBoxes with $256$ elements, we mimic the finite fields construction, based on a quotient structure: $\mathbb{F}_{2^8} = \mathbb{F}_{2^4}[X]/P_2$. 
However, the same notion of polynomial irreducibility is more difficult to
define in semifields, due to the possible non-associativity.

Actually, we just need to
build a bijection  $T': (\mathbb{S}_{2^4})^2 \to  (\mathbb{S}_{2^4})^2 $ as
close as possible to the inverse function, in order to (hopefully) take advantage of its cryptographic properties.
Therefore, we have to find an equivalent characterization to the polynomial irreducibility notion on in finite fields, applicable on semifields.
Let $P(X)=X^2 + \alpha X + \beta$, with $\alpha, \beta \in \mathbb{F}_{2^4} $, 
be an irreducible polynomial of degree $2$.
Elements of $\mathbb{F}_{2^8}$ viewed as $\mathbb{F}_{2^4}[X]/P$ are polynomials of degree $1$ of the form $aX+b$, denoted as couple $(a,b)\in \mathbb{F}^2_{2^4}$.  
Over the finite field $\mathbb{F}_{2^8}$, the inverse of $0X+b$ is $0X+u$, where $u=b^{-1}\in\mathbb{F}_{2^4}$ if $b\neq 0$. Then if $a\neq 0$, 
we let $\gamma \in  \mathbb{F}_{2^4}$ be such that $\gamma = a^{-1}b$, in order to obtain an unitary couple and thus simplify the following computations.
Then, the inverse of $aX+b$ is denoted $c'X+d'$ and we have $(aX+b)(c'X+d') = 1 \Leftrightarrow a(X+\gamma)(c'X+d') = 1$ or also $(X+\gamma)(cX+d)=1$, with $c'=a^{-1}c$ and $d'=a^{-1}d$. \\
After degree identification, and replacing $X^2=-\alpha X-\beta$, we obtain: 
\begin{equation}
\left \{
\begin{array}{ccc}
    d\gamma - c \beta & = & 1 \\
   c\gamma + d - c \alpha & = & 0 \\
\end{array}
\right. 
\end{equation}

Finally, we have:
\begin{equation}  
\left \{
\begin{array}{ccc}
    c & = & [(\alpha - \gamma)\gamma -\beta]^{-1} \\
  	d & = & c(\alpha - \gamma) \\
\end{array}
\right.
\label{eq2} 
\end{equation}

From the previous equations, it is now easy to deduce the following alternative characterisation of irreducible polynomials of degree $2$ over finite fields: 
\begin{mylem}\label{lem:charact}
Let $P : X^2 + \alpha X + \beta, \in \mathbb{F}_{2^4}[X] $, $P$ is irreducible if and only if $ \forall \gamma \in \mathbb{F}_{2^4}, [(\alpha - \gamma)\gamma -\beta]  \neq 0$.
\end{mylem}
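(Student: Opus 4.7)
The plan is to reduce the lemma to the elementary fact that a degree-$2$ polynomial over a field is irreducible if and only if it has no root in that field. The whole argument is then a one-line algebraic identity.

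First I would observe that for a quadratic $P(X)=X^2+\alpha X+\beta\in\F_{2^4}[X]$, irreducibility is equivalent to the absence of a root in $\F_{2^4}$: if $P$ factored non-trivially, both factors would have degree $1$, hence correspond to roots in the ground field; conversely any root yields a linear factor. I would state this briefly (it is standard) and reduce the claim to showing
\begin{equation*}
\forall\gamma\in\F_{2^4},\ P(\gamma)\neq 0\ \Longleftrightarrow\ \forall\gamma\in\F_{2^4},\ (\alpha-\gamma)\gamma-\beta\neq 0.
\end{equation*}

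Next I would verify the pointwise identity $P(\gamma)=(\alpha-\gamma)\gamma-\beta$ for every $\gamma\in\F_{2^4}$. A direct expansion gives $(\alpha-\gamma)\gamma-\beta=\alpha\gamma-\gamma^{2}-\beta$, while $P(\gamma)=\gamma^{2}+\alpha\gamma+\beta$; their difference is $2(\gamma^{2}+\beta)$, which vanishes in characteristic $2$. Hence the two expressions agree on $\F_{2^4}$, and the equivalence follows immediately. Alternatively one can note that this identity is precisely what was used to derive system~(\ref{eq2}): the quantity $(\alpha-\gamma)\gamma-\beta$ appears as the determinant of the linear system giving the inverse of $X+\gamma$ modulo $P$, so its non-vanishing for every $\gamma$ amounts to saying that every element of $\F_{2^4}[X]/(P)$ has an inverse, i.e.\ that the quotient is a field.

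There is no genuine obstacle here: the only subtlety worth flagging explicitly in the write-up is the use of characteristic $2$ to identify $P(\gamma)$ with $(\alpha-\gamma)\gamma-\beta$ (in odd characteristic one would instead get $P(\gamma)=-((\alpha-\gamma)\gamma-\beta)+2\alpha\gamma$, which has the same zero set but is not literally equal). Once this identification is in place, the equivalence of the two statements is immediate, concluding the proof.
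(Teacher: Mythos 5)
Your proof is correct. The paper itself gives no standalone argument for this lemma: it presents it as an immediate consequence of the preceding computation of the inverse of $X+\gamma$ in $\F_{2^4}[X]/(P)$, where $(\alpha-\gamma)\gamma-\beta$ is exactly the quantity that must be inverted, so its non-vanishing for every $\gamma$ is precisely what makes every nonzero element of the quotient invertible, i.e.\ the quotient a field, i.e.\ $P$ irreducible. Your primary route is a little different and more self-contained: you invoke the standard fact that a quadratic over a field is irreducible iff it has no root in that field, and then the characteristic-$2$ identity $(\alpha-\gamma)\gamma-\beta=\gamma^{2}+\alpha\gamma+\beta=P(\gamma)$, which turns the stated condition into a literal restatement of ``$P$ has no root in $\F_{2^4}$''; since you also mention the invertibility/determinant reading as an alternative, both viewpoints are covered, and yours has the advantage of not depending on the earlier derivation of Equation~(\ref{eq2}). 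One small imprecision in your aside on odd characteristic: there $(\alpha-\gamma)\gamma-\beta=-P(-\gamma)$, so its zero set is the image of the root set of $P$ under $\gamma\mapsto-\gamma$, not literally the same set; the two non-vanishing conditions are still equivalent because negation is a bijection of the field, and in any case this remark is immaterial for the characteristic-$2$ statement actually being proved.
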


Using Lemma~\ref{lem:charact}, we thus propose the following definition over semifields:
\begin{mydef}[Pseudo-irreducibility]\label{def:pseudoirred}
Let $P=X^2 + \alpha X + \beta \in \mathbb{S}_{2^4}[X] $, $P$ is pseudo-irreducible if and only if $ \forall \gamma \in \mathbb{S}_{2^4}, [(\alpha - \gamma)\gamma -\beta]  \neq 0$.
\end{mydef} 

Thus, in the case where $\mathbb{S}_{2^4} \simeq  \mathbb{F}_{2^4}$,
our pseudo-irreducibility notion reduces to irreducibility. 
Now we are able to define our pseudo-inversion as:
\begin{mylem}\label{lem:pseudoinv}
Let $P : X^2 + \alpha X + \beta, \in \mathbb{S}_{2^4}[X] $ be a pseudo-irreducible polynomial. The transformation: 

$$\begin{array}{ccccc}
	T' & : & (\mathbb{S}_{2^4})^2 & \to & (\mathbb{S}_{2^4})^2 \\
	& & (0,0) & \mapsto & (0,0) \\
	& & (0,b) & \mapsto & (0, b^{-1}) \\	
	& & (a,b) & \mapsto &(a^{-1}c,a^{-1}d)\\
\end{array}$$

such that $\gamma = a^{-1}b, c=[(\alpha - \gamma)\gamma -\beta]^{-1}$, and $d=c(\alpha - \gamma)$, is a bijection.
\end{mylem}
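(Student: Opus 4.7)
The plan is to separate the statement into well-definedness and bijectivity. Well-definedness only needs attention in the third branch, where the auxiliary element $c=[(\alpha-\gamma)\gamma-\beta]^{-1}$ must exist; since $\gamma=a^{-1}b$ ranges over all of $\mathbb{S}_{2^4}$ as $(a,b)$ runs over $(\mathbb{S}_{2^4}\setminus\{0\})\times\mathbb{S}_{2^4}$, the pseudo-irreducibility of $P$ (Definition~\ref{def:pseudoirred}) gives $(\alpha-\gamma)\gamma-\beta\neq 0$, so $c$ and $d=c(\alpha-\gamma)$ are well-defined elements of $\mathbb{S}_{2^4}$.

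Since $(\mathbb{S}_{2^4})^2$ is finite, bijectivity is equivalent to injectivity. I would split the domain according to the three branches into $D_1=\{(0,0)\}$, $D_2=\{0\}\times(\mathbb{S}_{2^4}\setminus\{0\})$ and $D_3=(\mathbb{S}_{2^4}\setminus\{0\})\times\mathbb{S}_{2^4}$, and first observe that the three images $T'(D_1),T'(D_2),T'(D_3)$ are pairwise disjoint: $T'(D_1)=D_1$ trivially, $T'(D_2)\subseteq D_2$ because $b\neq 0$ implies $b^{-1}\neq 0$, and $T'(D_3)\subseteq D_3$ because the first coordinate $a^{-1}c$ is a product of two nonzero elements, hence nonzero by axiom~2 of the semifield definition. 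Global injectivity thus reduces to injectivity on each branch. The first two branches are immediate: $T'|_{D_1}$ is constant on a singleton, and $T'|_{D_2}$ is just multiplicative inversion on $\mathbb{S}_{2^4}\setminus\{0\}$, a bijection because the multiplicative loop is cancellative (again by axiom~2).

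The bulk of the work, and the main obstacle, is the injectivity of $T'|_{D_3}$. The strategy is to argue that $(a,b)$ can be uniquely reconstructed from its image $(x,y)=(a^{-1}c,a^{-1}d)$: one reads off the quantity $\alpha-\gamma$ from $x$ and $y$ (playing the role of the ``ratio $y/x$'' in the finite-field derivation leading to Equation~(\ref{eq2})), then $\gamma$, then $c=[(\alpha-\gamma)\gamma-\beta]^{-1}$, then $a$ from $x=a^{-1}c$, and finally $b=a\gamma$. The delicate point is that the computations preceding Equation~(\ref{eq2}) were written over a finite field and implicitly relied on associativity and commutativity, neither of which is available in a general semifield; the recovery must therefore be rephrased using only distributivity and the two-sided cancellation supplied by axiom~2, keeping careful track of whether each inverse is a left or a right inverse and checking that the formulas still compose correctly under the multiplicative loop structure of $\mathbb{S}_{2^4}$.
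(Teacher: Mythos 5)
Your preparation (well-definedness of $c$ via pseudo-irreducibility, the three-branch decomposition, disjointness of the images of the branches, and injectivity on the two easy branches) is sound, but the proof stops exactly where the lemma has content: injectivity on $D_3=(\mathbb{S}_{2^4}\setminus\{0\})\times\mathbb{S}_{2^4}$ is only announced as a plan. The reconstruction you sketch hinges on reading off $\alpha-\gamma$ from the image $(x,y)=(a^{-1}c,\,a^{-1}d)$, i.e.\ on an identity of the shape ``$y$ divided by $x$ equals $d$ divided by $c$''. That identity is precisely an instance of associativity (and, in a loop, one must even say which of the left or right inverses and which order of multiplication is meant), and associativity is exactly what a semifield does not supply. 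You flag this yourself as ``the delicate point'' but never resolve it, so no argument is actually given that two distinct pairs $(a_1,b_1)\neq(a_2,b_2)$ with $a_i\neq 0$ cannot share an image. As written, the central claim is not established; this is a genuine gap, not a presentational one.

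For comparison, the paper does not try to invert the formulas at all. It argues by direct cancellation on two hypothetical preimages: if the images coincide then the $d$-parts satisfy $c(\alpha-\gamma_1)=c(\alpha-\gamma_2)$, so by left distributivity $c(\gamma_1-\gamma_2)=0$, and since a semifield has no zero divisors (axiom~2), $\gamma_1=\gamma_2$; finiteness then upgrades injectivity to bijectivity. No quotient of image coordinates is ever formed, which is how the non-associativity obstacle you worry about is sidestepped (the paper's own write-up is admittedly terse about the $a$- and $c$-dependence, but the mechanism is cancellation via distributivity plus axiom~2). If you want to complete your route, replace the ``compute the ratio $y/x$'' step by such cancellation arguments --- e.g.\ compare two preimages directly and use that left and right multiplications by nonzero elements are injective --- rather than attempting to rebuild $(a,b)$ through composed inverses as in the finite-field derivation of Equation~(\ref{eq2}).
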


\begin{proof}

In the case where $a=0$, $T'$ is obviously one-to-one.
Let us assume now that $a \ne 0$. 

For proving injectivity, we suppose that $\exists \gamma_1, \gamma_2 \in \mathbb{S}_{2^4}$ such that $c(\alpha-\gamma_1)=c(\alpha-\gamma_2)$.
Then $c\alpha-c\gamma_1=c\alpha-c\gamma_2$, so that $c(\gamma_1-\gamma_2)=0$ and thus $c^{-1}c(\gamma_1-\gamma_2)=0$. Finally $\gamma_1 = \gamma_2$.

Then, as $\mathbb{S}^2_{2^4}$ has a finite cardinality, any injective endofunction 
is bijective.
\end{proof}

\section{Semifields efficient generation}\label{sec:gener}
As a prerequisite for constructing pseudo-extensions, we need semifields of order $2^{4}$. In this section, we expose some results and optimizations about efficient generation of semifields.
Recent results about semifields are detailed in~\cite{Combarro:2011:advances_sf} and in particular, they  show that we can represent semifields as matrix vector spaces:

\begin{mypropo}[~\cite{Combarro:2011:advances_sf}, Prop 3.]
There exists a finite semifield $\mathbb{S}$ of dimension $n$ over $ \mathbb{F}_q \subseteq \mathbb{S} $ iff there exists a set of $n$ matrices  $\{A_1,...A_n\} \subseteq GL(n,q)$ such that: 
\begin{itemize}
\item $A_1$ is the identity matrix
\item $\displaystyle\sum_{i=1}^n \lambda_iA_i \in GL(n,q), \forall (\lambda_1,...,\lambda_n) \in \mathbb{F}_q^n  \backslash \{0\} $ 
\item The first column of the matrix $A_i$ is the column vector with a $1$ in the $i^{th}$ position, and $0$ everywhere else.\\
\end{itemize}
\end{mypropo}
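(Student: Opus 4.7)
The proposition expresses an equivalence between $n$-dimensional semifields $\mathbb{S} \supseteq \mathbb{F}_q$ and tuples $(A_1, \ldots, A_n)$ of matrices in $GL(n,q)$ satisfying the three listed conditions. The plan is to set up the correspondence explicitly by taking $A_i$ to be the matrix of right multiplication by the $i$-th basis vector of $\mathbb{S}$, and then to verify both implications.

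For the forward direction, I would fix a basis $e_1, \ldots, e_n$ of $\mathbb{S}$ over $\mathbb{F}_q$ with $e_1$ equal to the multiplicative neutral element, and define $A_i$ as the matrix (in this basis) of the $\mathbb{F}_q$-linear map $R_{e_i} : x \mapsto x \cdot e_i$. Then $A_1 = I$ because $R_{e_1}$ is the identity endomorphism; the first column of $A_i$ is the coordinate vector of $R_{e_i}(e_1) = e_1 \cdot e_i = e_i$; and for any nonzero $(\lambda_1, \ldots, \lambda_n) \in \mathbb{F}_q^n$, the matrix $\sum_i \lambda_i A_i$ represents $R_y$ for $y = \sum_i \lambda_i e_i \neq 0$, so the no-zero-divisor axiom of a semifield makes $R_y$ injective, and hence invertible in finite dimension, giving $\sum_i \lambda_i A_i \in GL(n,q)$.

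For the converse, given matrices satisfying the three conditions, I would define a multiplication on $\mathbb{F}_q^n$ by $x \cdot y := M(y)\,x$ with $M(y) := \sum_i y_i A_i$. Distributivity on both sides is then immediate from matrix linearity and from the linearity of $y \mapsto M(y)$. The no-zero-divisor axiom follows from the invertibility hypothesis: if $M(y)x = 0$ with $y \neq 0$, then $M(y)$ is invertible so $x = 0$; conversely if $x \neq 0$, then $\ker M(y) \neq \{0\}$ forces $y = 0$ by contraposition. For the two-sided neutral, $A_1 = I$ gives $x \cdot e_1 = A_1 x = x$, and the first-column condition gives $e_1 \cdot x = M(x)\,e_1 = \sum_i x_i A_i e_1 = \sum_i x_i e_i = x$.

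The most delicate point, where the three conditions combine in a non-trivial way, is this identity verification in the converse direction: $A_1 = I$ makes $e_1$ a right neutral, the first-column condition is exactly what makes $e_1$ a left neutral, and the invertibility hypothesis simultaneously rules out zero divisors on either side. Everything else is routine linear algebra once the encoding of multiplication as \emph{``matrix of right multiplication by a basis vector''} has been adopted.
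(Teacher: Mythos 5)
The paper itself gives no proof of this proposition: it is imported verbatim from the cited reference, so there is nothing in-paper to compare you against. Your argument is the standard one underlying that reference --- identify $A_i$ with the matrix of right multiplication $R_{e_i}$ in a basis whose first vector is the identity, read off the three conditions, and reverse the encoding by setting $x\cdot y:=\bigl(\sum_i y_iA_i\bigr)x$ --- and it is correct; in particular your observation that $A_1=I$ gives the right neutral, the first-column condition gives the left neutral, and the invertibility condition is equivalent to the absence of zero divisors is exactly the heart of the equivalence. Two points deserve to be made explicit. First, the step asserting that $\sum_i\lambda_iA_i$ is the matrix of $R_{y}$ for $y=\sum_i\lambda_ie_i$ (and indeed that each $R_{e_i}$ is $\mathbb{F}_q$-linear at all) uses $\mathbb{F}_q$-bilinearity of the product, i.e. $(\lambda x)y=\lambda(xy)=x(\lambda y)$ for $\lambda\in\mathbb{F}_q$; this is automatic when $q$ is prime (the case $q=2$ actually used in this paper), since prime-field scalars are sums of the identity and distributivity does the rest, but for a general prime power $q$ it is precisely what the hypothesis ``of dimension $n$ over $\mathbb{F}_q\subseteq\mathbb{S}$'' must encode ($\mathbb{F}_q$ sitting in a nucleus/the center), so you should flag where it is used. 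Second, in the converse you stop after verifying the semifield axioms; you should add the one-line check that $\lambda\mapsto\lambda e_1$ embeds $\mathbb{F}_q$ into the structure and that multiplication by $\lambda e_1$ on either side agrees with scalar multiplication (from $M(\lambda e_1)=\lambda I$ and $M(x)(\lambda e_1)=\lambda x$), so that what you built really is a semifield of dimension $n$ over $\mathbb{F}_q\subseteq\mathbb{S}$ as the statement requires.
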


This proposition is fundamental, since it allows us to use efficients matrix computations to discover new semifields.
In our case, we restrict this proposition to $q=2$, and $n \le 8 $. 

In order to generate semifields, we use the algorithms described in~\cite{Rua:2009:classification_sf}. 
The idea is to select lists of matrices extracted from $GL(n,2)$ with a prescribed first column.
It is thus necessary to check invertibility of all possible linear combinations, in order to gradually reduce the possible semifield candidates.
In practice, the invertibility check is done by a determinant computation.
Then, in order to accelerate the process, some combinations of matrices can be discarded, as they can yield already found spaces. This is formalized via the notion of {\em isotopy} of semifields:
\begin{mydef}[Isotopy]\label{def:isotopy}
Let $\mathbb{S}_1$and $\mathbb{S}_2$ be two semifields over the same finite field $\mathbb{F}_p$, then an isotopy between $\mathbb{S}_1$ and $\mathbb{S}_2$ is a triple $(F,G,H)$ of bijective linear maps $\mathbb{S}_1 \to \mathbb{S}_2$ over $\mathbb{F}_p$ such that $H(ab) = F(a)G(b), \forall a,b \in \mathbb{S}_1$
\end{mydef}
Definition~\ref{def:isotopy} is used to define an equivalence relation between semifields, which can be verified with the help of matrix multiplications, see~\cite[Prop. 2]{Combarro:2011:advances_sf}.

Even if only square matrices with small size are involved, semifield generation remains complex for the large amount of computations involved. 
For instance, generation of all matrices constituting $GL(8,2)$ could require $2^{64}$ determinant computations. 

For $\Sf_{2^4}$ this is less important, but any improvement becomes substantial
if you are to generate many \SBoxes and experiment with them. We thus propose in
the following some optimization for the computation of the determinant and of
matrix multiplication, based on tabulation and Gray codes. 

\subsection{Optimizing determinant using double level Gray codes and tabulations}
Classical determinant computations use Gaussian elimination, with a $O(\frac{2}{3}n^3)$ complexity for a single determinant computation. 
Thus, in order to build $GL(n,2)$ by testing all the possible matrices, we obtain an overall complexity of $O(\frac{2}{3}n^3 2^{n^2})$. 
Here, we present two ways to reduce this complexity.

The first optimization is about tabulating the computations via the recurrence
formula of the Laplace expansion of determinants:
\begin{itemize}
	\item If $A$ is $1\times 1$ matrix, $det(A) = a$, with $A=(a)$.
	\item Otherwise, $n \ge 2$,  $$det(A) = \displaystyle \sum_{i=1}^n (-1)^{i+j}a_{j}M_{1j}$$ with $M_{1j}$ the determinant of the submatrix defined as A deprived of its first row and of its $j^{th}$ column (we chose the first row deletion and the column development arbitrarilly). 
\end{itemize}

More precisely, since we have to compute all determinants for each matrix size, the idea is to store them in order to accelerate the computations of the larger matrix dimension. 
By doing this, we replace a sub-determinant computation by a table access. The drawback of this method is the memory limitation, and we succeeded to apply it for square matrices up to size $6$. Indeed, for $n=7$, we should store $2^{49}$ computations, that being around 500 Tb of data.

Our second optimization is about improving the way of passing through all
matrices. Since each matrix has a unique integer representation (using the $n^2$
bits as digits), the easiest method to go through all the determinants is to
increment this integer representation until its largest value. However, it implies "random" modifications on the matrix binaries coefficients. By using a Gray code, which allows to pass from a value to another by modifying only one bit between them, we are thus able to pass from one determinant to the other by modifying only one term in the sum:
the idea is to cut the matrix in two parts, the first row on the one hand, containing $n$ bits, and the remainding ${n(n-1)}$ coefficients, which we call the base, on the other hand.
Then we apply two distinct Gray codes, one for each hand. 
First, we fix a value for the base, and then we go all over possible values for the first row, following a Gray code on this row. 
Second, we change the base value with another dedicated Gray code, and go again through all possible values from the first line. 
Memory exchange is thus reduced because we only need to access the lower dimension table $n$ times for each possible submatrix determinant, but for $2^n$ computations. 

The complexity is also drastically reduced by linking successive computations. Indeed, by modifying only one bit between two values, the determinant computation is reduced to the following formula: 
$\Delta_{k} = \Delta_{k-1} \oplus M_{1j}$, where $ M_{1j}$ is defined as in the previous formula, and $\Delta_0 = 0 $, since in a Gray code the first number is $0$. 
Thus, we reduced the determinant computation, which would normally requires $n-1$ XOR operations to only one, for $n \le 7$. 

Finally, we obtain the following lemma: 
\begin{mylem}
Let $n$ be the dimension of squares matrices, $n \le 6$, then the complexity of the determination of $GL(n,2)$, using tabulation and Gray codes, is bounded by $D_n$ that satisfies:
\begin{equation}
D_n = 2^{n^2} + O(2^{n^2-n-2})
\end{equation}
\end{mylem}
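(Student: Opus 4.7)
The plan is to decompose the algorithm into two nested Gray-coded layers -- the inner code running over the $n$ bits of the first row, and the outer code running over the $n(n-1)$ bits of the remaining rows (the base) -- and to analyze each layer separately. The key algebraic ingredient, which I would verify first, is that over $\mathbb{F}_2$ the Laplace expansion along the first row reads $\det(A)=\sum_{j=1}^{n}a_{1j}M_{1j}$, so flipping a single bit $a_{1j}$ changes $\det(A)$ exactly by XOR with $M_{1j}$. This yields the one-XOR update $\Delta_k=\Delta_{k-1}\oplus M_{1j}$ already asserted in the paragraph preceding the lemma.

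With this identity in hand, I would fix a base and walk through its $2^n$ first-row values via a Gray code. Starting from $\Delta_0=0$, each of the $2^n-1$ transitions flips one bit $j$ and hence costs exactly one XOR, with the relevant $M_{1j}$ already sitting in a register. A prologue of at most $n$ lookups retrieves the minors $M_{11},\ldots,M_{1n}$ from the precomputed table of $(n-1)\times(n-1)$ determinants; this tabulation fits in memory precisely under the hypothesis $n\le 6$, since its storage is of the order $2^{(n-1)^2}\le 2^{25}$. Summed over the $2^{n(n-1)}$ bases, the inner XORs alone contribute exactly $2^n\cdot 2^{n(n-1)}=2^{n^2}$, the announced leading term.

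For the outer layer, the Gray code on the base flips one coefficient per step. The prologue cost is therefore bounded by $n$ table accesses per base, summing to $n\cdot 2^{n(n-1)}$ lookups overall. For $n\le 6$ the factor $n$ is a constant that can be absorbed into the $O$-notation, giving an outer contribution of $O(2^{n^2-n-2})$ (a constant of $\le 24$ in front of $2^{n^2-n-2}$ suffices, since $n\cdot 2^{n^2-n}\le 6\cdot 2^{n^2-n}=24\cdot 2^{n^2-n-2}$). Adding the two layers delivers $D_n=2^{n^2}+O(2^{n^2-n-2})$ as claimed.

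I expect the delicate step to be the bookkeeping that keeps the two Gray codes decoupled: one needs to verify that between two consecutive bases the inner-code state can be reused without recomputation beyond the $n$-lookup prologue, and that the precise constant in the outer $O(\cdot)$ really fits the advertised $O(2^{n^2-n-2})$ rather than merely $O(2^{n^2-n})$. Both reduce to careful checks on the traversal order of the double loop and on the memory layout of the tabulation, rather than to any non-trivial algebraic or combinatorial argument.
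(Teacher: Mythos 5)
Your argument is correct in substance, but it takes a genuinely different route from the paper. The paper never analyzes the two Gray-code layers separately: it counts XOR operations via the recurrence $D_1=0$, $D_n=D_{n-1}+2^{n^2}-1$, where $2^{n^2}-1$ is the number of Gray-code transitions at dimension $n$ and the term $D_{n-1}$ charges for recursively building, by the same method, the table of all $(n-1)\times(n-1)$ determinants that your ``prologue'' consults; the bound then follows by summing $\sum_{i=2}^{n}(2^{i^2}-1)$ and dominating it by a geometric series of ratio $2^{-(n+2)}$, which is exactly where the exponent $n^2-n-2$ comes from. Your direct two-layer count instead gets the leading term from the inner XORs (strictly $(2^n-1)2^{n(n-1)}=2^{n^2}-2^{n(n-1)}$ rather than ``exactly'' $2^{n^2}$, which only helps the upper bound) and places the per-base overhead of $n$ table accesses, at most $24\cdot 2^{n^2-n-2}$ for $n\le 6$, into the error term; so your $O(2^{n^2-n-2})$ has a different origin (lookups) than the paper's (recursive tabulation), and you implicitly use a slightly different cost model, since the paper counts only XORs and treats table accesses as memory traffic outside $D_n$. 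The one point you should add to fully cover ``using tabulation and Gray codes'' is the cost of building the tabulation itself, which you declare precomputed: by the same analysis one dimension down this costs on the order of $2^{(n-1)^2}=2^{n^2-2n+1}$, which is at most $2^{n^2-n-2}$ for $n\ge 3$ (and negligible for $n=2$), so it is absorbed into the same error term; without that remark your bound only accounts for the top level of the algorithm, whereas the paper's recurrence includes all levels by construction.
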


\begin{proof}
The complexity of the above algorithm is obtained by counting XOR operations and is given by the following recurrence formula:  
\begin{equation}
\left \{
\begin{array}{ccc}
    D_1 & = & 0 \\
  	D_n & = & D_{n-1} + 2^{n^2} - 1 \\
\end{array}
\right. 
\end{equation}
Therefore, we have:
$$\begin{array}{ccc}
	D_n  & = & \sum\limits_{i=2}^{n} \left(2^{i^2} -1\right)  \\
	&  = & \sum\limits_{j=0}^{n-2} 2^{(n-j)^2} - (n-1) \\
	& = & 2^{n^2}\sum\limits_{j=0}^{n-2} 2^{-2nj+j^2} - (n-1) \\
	& \le & 2^{n^2}\sum\limits_{j=0}^{n-2} 2^{-2nj+(n-2)j} - (n-1)\\
	& = & 2^{n^2}\sum\limits_{j=0}^{n-2} 2^{(-n-2)j} - (n-1) \\ 
	& = & \frac{1 - 2^{(-n-2)(n-1)}}{1 - 2^{-n-2}} -(n-1)\\
	& \le & 2^{n^2} (1 + \frac{1}{2^{n+2}-1}) - (n-1)
\end{array}$$
\end{proof}
We thus have a gain of a factor $\frac{2}{3}n^3$ over the naive Gaussian
elimination.

\subsection{Optimization of matrix multiplication by tabulation}

Similarly, we can optimize matrix multiplication with some tabulations. 
A first step consists in computing and storing all
possible products between all values of the first row of the left matrix, and
half the right one. Then, the matrix product $A \cdot B$ will simply  be
obtained by two table accesses per row of $A$ (one for each half of
$B$). Therefore, for computing $k$ products, 
we obtain a complexity bound of $O(2^{n+ n \lceil \frac{n}{2}
  \rceil}(n-1) + k2n) $. 
As comparison, if we used a scalars products and transposition algorithms, we
would have a complexity of $O(kn^2(n-1))$. As a conclusion, we see that our
optimization is interesting only if $k > \frac{2^{n+ n \lceil \frac{n}{2}
  \rceil}}{n^2 + \frac{2n}{n-1}}$. 
For semifields generation, the number of equivalence test is of the order of $2^{n^2}$ and the second term of the first complexity is therefore dominant.
In this case our optimization allows to gain another factor of $n^2$ in the complexity bound.

\section{\SBoxes criterion}\label{sec:crit}

Several criteria have been defined to measure \SBoxes resistance when faced to different types of attacks. In order to select our \SBoxes, we have chosen the following criteria, following mostly~\cite{Alvarez:2008:apn}. 
We denote by $S$ the \SBox function.
\begin{enumerate}
\item Bijectivity. By construction we only look for bijective functions.
\item Fixed Points. We favor functions without any fix points nor reverse fix points (as for the AES, this can be avoided by some affine transform on the trial). 
\item Non-linearity. We return the linear invariant $\lambda_S$, defined as follows:  
						$$\lambda_S = \displaystyle \max_{a,b\in \mathbb{F}_{2^n}, b\ne 0} \{ |-2^{n-1} + \#\{x \in \mathbb{F}_{2^n} : (a|x) \oplus (b|S(x)) = 0 \} | \} $$
\item XOR table and differential invariant.  A XOR table of S is based on the
  computation of $\delta_S(a,b) = \{x \in \mathbb{F}_{2^n} : S(x) \oplus S(a
  \oplus x) = b \}, \forall a,b\in \mathbb{F}_2^n$. The differential invariant
  $\delta_S$ is equal to  
$$ \delta_S=\max_{a,b\in \mathbb{F}_2^n, a\ne 0} |\delta_S(a,b)| 
=\max_{a,b\in \mathbb{F}_2^n, a\ne 0} | \{x \in \mathbb{F}_{2^n} : S(x) \oplus S(a
  \oplus x) = b \}|
.$$ 

\item Avalanche. Strict avalanche criterion of order $k$ ($SAC(k)$) requires
  that the function $x \mapsto S(x) \oplus S(a \oplus x)$ stays balanced for all
  $a \in \mathbb{F}_{2^n}$ of weight inferior to $k$. The goal is to provide a
  1/2 probability of outputs modifications in case of $k$ bits complemented for
  entries.  
 In our case, we measure the distance of the \SBox to $SAC(1)$, component
 function by component function, and we denote by $A_S = \displaystyle \max_{i=1..8} |
 2^{n-1}-\sum_x S_i(x) \oplus S_i(a \oplus x)|$ the obtained maximum.

\item Bit independance. Bit independancy is modelized by the computation of
  SAC(1) on the function defined by the sum of any two columns or any column of
  the matrix representation of S. As previously, we then measure its distance to
  SAC(1), and we denote it by $B_S$.  

\item Transparency. This notion has been introduced by Prouff in~\cite{Prouff:2005:transparency}, and allows to measure the resistance of \SBoxes against differential power analysis. 
The definition is the following: 
\begin{multline*}
  T_S = \max_{\beta \in \mathbb{F}_2^n} \biggl(\left|n - 2H(\beta)\right|\\ - 
    \frac{1}{2^n(2^n-1)}\sum\limits_{a \in \mathbb{F}_2^n} \Bigl| \sum\limits_{v
        \in \mathbb{F}_2^n, H(v)=1}(-1)^{v\beta}W_{D_a,S}(0,v)\Bigr|\biggr)
\end{multline*}
 where $W_{D_a,S}(u,v) = \sum\limits_{x \in \mathbb{F}_2^n}(-1)^{v[S(x) +
   S(x+a)] + ux} $ and $H(x)$ is the hamming weight.
\end{enumerate}

By using this criteria, we are able to compare the efficiency of our \SBoxes with the already existing ones.

For instance, the \SBoxes of AES and Camellia have minimal non-linearity $\lambda_{AES}=\lambda_{Camellia}=16$, minimal
differential invariant among non-APN functions,
$\delta_{AES}=\delta_{Camellia}=4$, very good bit independence
$B_{AES}=B_{Camellia}=8$ and avalanche criterion with Camellia slightly better
on the latter: $A_{AES}=8$ and $A_{Camellia}=6$.

\section{Experimental results}\label{sec:expe}
\subsection{Results of \SBoxes based on semifields
  pseudo-extensions}

We have implemented a simple matrix arithmetic using our optimizations, in order
to generate semifields of order 16 plus their pseudo-extensions. We have then
constructed \SBoxes with the help of the pseudo-inverse bijection of
Lemma~\ref{lem:pseudoinv}, and apply all the tests of Section~\ref{sec:crit}. 

We managed to generate 19336 semifields of order $2^{4}$ (with possible isomorphic ones). 
In average, $98$ polynomials per semifields
were pseudo-irreducible, with a minimum of $91$ and a maximum of $120$ (the
latter for the finite field).
By testing all possible pseudo-irreducible polynomials for each semifield, we
obtained $12781$ \SBoxes, with maximal nonlinearity, differential invariants,
degrees and bit interdependency. 

Among the latter $8364$ had fix points, and among the ones without fix points,
$4122$ had avalanche equal to $8$ (as good as AES) and $288$ had avalanche
equal to $6$ (as good as Camellia).
Among the ($4122+288$) latter \SBoxes, $863$ have a better transparency level than the inverse function on
a finite field. \\

\begin{table}[htb]\center
\begin{tabular}{|c|c|c|c|c|c|c|c|c|c|}
  \hline 
  & $\delta $ &  $\lambda $ & Alg deg & Poly deg & 
  Fix point & Av & Bi &  Tr \\
  \hline
  AES & 4 & 16 & 7 & 254 & 0 & 8 & 8  & 7.85319 \\
  Camellia1 & 4 & 16 & 7 & 254 & 0 & 6 & 8 & 7.85564 \\
  15306 & 4 & 16 & 7 & 254 & 0 & 8 & 8 & 7.84314 \\
  19203 & 4 & 16 & 7 & 254 & 0 & 6 & 8 & 7.84804 \\
  \hline
\end{tabular}
\caption{Some resistance criteria for \SBoxes}\label{tab:critbox}
\end{table}

For instance, in Table~\ref{tab:critbox}, $'15306'$ represents the \SBox obtained with an avalanche equals to $8$, and
lowest transparency score. The \SBox $'19203'$ has the best transparency score
with an avalanche equals to 6. \\

To illustrate our approach, we give here the construction of our \SBox
$'19203'$. It is generated by the semifield generated by the linear combinations
of the matrices in Equation~(\ref{eq:sf19203}) below, with $X^2 + 6X + 1$ as 
pseudo-irreducible polynomial for the pseudo-extension, with $6$
corresponding to $0110$ in binary and thus to the linear combination $A_2 + A_3$.
\begin{equation}\label{eq:sf19203}
\left \{ A_1 = 
\begin{pmatrix}
  1 & 0 & 0 & 0 \\
  0 & 1 & 0 & 0 \\
  0 & 0 & 1 & 0 \\
  0 & 0 & 0 & 1
\end{pmatrix} , 
A_2 =
\begin{pmatrix}
  0 & 0 & 0 & 1 \\
  1 & 1 & 0 & 0 \\
  0 & 1 & 0 & 1 \\
  0 & 0 & 1 & 0
\end{pmatrix} , 
A_3 =
\begin{pmatrix}
  0 & 0 & 1 & 1 \\
  0 & 1 & 0 & 0 \\
  1 & 1 & 1 & 1 \\
  0 & 1 & 0 & 0
\end{pmatrix}, 
A_4 =
\begin{pmatrix}
  0 & 1 & 0 & 1 \\
  0 & 1 & 1 & 1 \\
  0 & 1 & 1 & 0 \\
  1 & 0 & 0 & 1
\end{pmatrix} 
\right \} 
\end{equation}

Finally, we get Table~\ref{tab:23016} which shows the $'19203'$ \SBox in
hexadecimal.

\begin{table}[htb]\center
\begin{tabular}{|c||c|c|c|c|c|c|c|c|c|c|c|c|c|c|c|c|}
  \hline
  & 0 & 1 &2 &3 &4 &5&6&7&8&9&a&b&c&d&e&f \\
  \hline
  \hline
  0&  3f & 20 & 9a & f9 & 5c & 43 & d8 & a4 & bb & 7d & 1e & 85 & c7 & 62 & e6 & 1  \\ \hline
  1&  8c & b9 & 80 & 39 & a1 & 9c & ce & a6 & 2c & 97 & 5d & 9d & c6 & a3 & 4f & 6f \\ \hline
  2&  5b & aa & de & 61 & ab & 32 & 24 & 22 & 9e & 3d & 4c & ca & 7b & e5 & 65 & d6 \\ \hline
  3&  b4 & bf & 4b & 35 & fb & b6 & 6b & 50 & 53 & 5 & 92 & f3 & e4 & 4e & 29 & 33 \\ \hline
  4&  d0 & 40 & 4a & bc & d4 & 45 & 49 & 10 & e0 & b7 & 6c & 8f & c4 & 9 & 82 & 8 \\ \hline
  5&  63 & db & 7f & f1 & e3 & 52 & 13 & 2a & 28 & 60 & 5f & f8 & ec & eb & 2e & c2 \\ \hline
  6&  5e & 25 & 4 & 41 & 69 & 95 & 72 & 34 & 75 & 4d & 31 & ac & 26 & f0 & b2 & 83 \\ \hline
  7&  2 & a & 84 & 5a & 57 & 86 & ff & 1f & 30 & 14 & 36 & 88 & d2 & d7 & 70 & 74 \\ \hline
  8&  b1 & 6 & d3 & 98 & 87 & 8e & 38 & 77 & 99 & 96 & 8a & 67 & 46 & 6d & f5 & 1d \\ \hline
  9&  3a & 1b & 37 & ee & 3b & 81 & e1 & df & d1 & 93 & cc & 91 & b8 & 3c & 51 & a9 \\ \hline
  a&  d5 & 1a & 2b & 59 & b & 12 & bd & f7 & a0 & 2d & 78 & 76 & 71 & cd & 8b & 18 \\ \hline
  b&  e8 & 11 & ad & be & e2 & 7e & 0 & a8 & cb & 9b & fa & 58 & 9f & ef & f6 & 94 \\ \hline
  c&  ed & 27 & ba & f & 2f & d & c & 54 & 21 & 73 & b0 & 19 & f4 & 8d & c8 & 6e \\ \hline
  d&  89 & 48 & c5 & 23 & 64 & 47 & 7c & 16 & c1 & fd & e7 & cf & ea & 15 & da & a7 \\ \hline
  e&  7 & e9 & c3 & 44 & a2 & e & 79 & 7a & 3e & 90 & 6a & fc & a5 & 56 & b3 & dd \\ \hline
  f&  66 & c9 & dc & b5 & ae & af & 68 & f2 & 17 & 42 & 55 & d9 & 3 & c0 & 1c & fe \\ \hline

\end{tabular} 
\caption{An \SBox generated from a semifield with maximal linear and differential invariants}\label{tab:23016}
\end{table}

\subsection{APN functions based on semifields}
Vectorial boolean functions obtaining the best possible result for the $\delta$ invariant,
i.e. $\delta =2$, are called almost perfect non-linear functions (denoted APN). For instance,
in~\cite{Alvarez:2008:apn}, the cube
function is APN over the finite field with $256$ elements. As previously, we mimic this function on
semifields, instead of finite fields. In $\F_{2^8} =
\mathbb{F}_{2^4}[X]/P$, with $P$ an irreducible polynomial of the form
$X^2+\alpha X + \beta$, the cube function is defined as $(aX+b) \mapsto (cX+d)$ such that $(aX+b)^3 = (cX+d),
a,b,c,d \in \F_{2^8}$. One of the possibilities is: 
\begin{equation}
  \begin{array}{lclclc}
    c & = &[(aa^2)\alpha]\alpha - (aa^2)\beta +a(ab) +a(ba) + ab^2 -(ba^2)\alpha \\
    d & = &[(aa^2)\alpha]\beta -(ba^2)\beta +b(ab) +b(ba) + bb^2. 
  \end{array}
  \label{eq6}
\end{equation}

Finally, we have generated $2684$ APN functions, $336$ having perfect avalanche,
and bit independence scores, i.e. $A_{APN_i}=0$ and $B_{APN_i}=0$.

For instance, one of the APN function obtained is given in Table~\ref{tab:apncube}.

\begin{table}[htb]\center
\begin{tabular}{|c||c|c|c|c|c|c|c|c|c|c|c|c|c|c|c|c|}
  \hline
  & 0 & 1 &2 &3 &4 &5&6&7&8&9&a&b&c&d&e&f \\
  \hline
  \hline
  0&00 & 01 & 04 & 0f & 0f & 08 & 02 & 0f & 02 & 04 & 08 & 04 & 01 & 01 & 02 & 08\\ \hline 
  1&cf & fa & c4 & fb & 12 & 21 & 10 & 29 & 7c & 4e & 79 & 41 & ad & 99 & a1 & 9f\\ \hline 
  2&38 & 58 & 8e & e4 & 93 & f5 & 2c & 40 & 32 & 55 & 8a & e7 & 95 & f4 & 24 & 4f\\ \hline 
  3&a4 & f0 & 1d & 43 & dd & 8f & 6d & 35 & 1f & 4c & a8 & f1 & 6a & 3f & d4 & 8b\\ \hline 
  4&a4 & 6a & f1 & 35 & 43 & 8b & 1f & dd & d4 & 1d & 8f & 4c & 3f & f0 & 6d & a8\\ \hline 
  5&e2 & 18 & b8 & 48 & d7 & 2b & 84 & 72 & 23 & de & 77 & 80 & 1a & e1 & 47 & b6\\ \hline 
  6&b1 & 1e & 56 & f3 & f2 & 5b & 1c & bf & c9 & 61 & 20 & 82 & 86 & 28 & 66 & c2\\ \hline 
  7&a4 & 3f & 4c & dd & 35 & a8 & d4 & 43 & 6d & f1 & 8b & 1d & f0 & 6a & 1f & 8f\\ \hline 
  8&b1 & 28 & 61 & f2 & bf & 20 & 66 & f3 & 1c & 82 & c2 & 56 & 1e & 86 & c9 & 5b\\ \hline 
  9&38 & 95 & e7 & 40 & e4 & 4f & 32 & 93 & 24 & 8e & f5 & 55 & f4 & 58 & 2c & 8a\\ \hline 
  a&e2 & 1a & 80 & 72 & 48 & b6 & 23 & d7 & 47 & b8 & 2b & de & e1 & 18 & 84 & 77\\ \hline 
  b&38 & f4 & 55 & 93 & 40 & 8a & 24 & e4 & 2c & e7 & 4f & 8e & 58 & 95 & 32 & f5\\ \hline 
  c&cf & 99 & 4e & 12 & 29 & 79 & a1 & fb & 10 & 41 & 9f & c4 & fa & ad & 7c & 21\\ \hline 
  d&cf & ad & 41 & 29 & fb & 9f & 7c & 12 & a1 & c4 & 21 & 4e & 99 & fa & 10 & 79\\ \hline 
  e&b1 & 86 & 82 & bf & f3 & c2 & c9 & f2 & 66 & 56 & 5b & 61 & 28 & 1e & 1c & 20\\ \hline 
  f&e2 & e1 & de & d7 & 72 & 77 & 47 & 48 & 84 & 80 & b6 & b8 & 18 & 1a & 23 & 2b\\ \hline

\end{tabular}
\caption{An APN function generated via a pseudo-cube function over a product of semifields}\label{tab:apncube}
\end{table}

\subsection[S-Boxes based on degree 4 extensions of S(4)]{\SBoxes based on $\mathbb{S}_4^4$}
In~\cite{Combarro:2011:advances_sf}, the classification of semifields of order $256$ has been
done for characteristic four, i.e. $\mathbb{S}_{4^4}$. 
We thus have also tried to construct \SBoxes based on all these $28$
semifields up to isotopy, by using the inverse function. However, none of the
thus generated \SBoxes had a couple $(\delta, \lambda) = (4,16)$,  apart from
the one build in the semifield isomorphic to $\F_{2^8}$.

\section{Conclusion}\label{sec:concl}

In order to construct new efficient $8\times 8$ bijective \SBoxes, we replace the usual finite fields algebraic structure by semifields. However, our current knowledge about this subject does not allow us to construct directly $\mathbb{S}_{2^8}$.
We therefore build pseudo-extensions of degree $2$ of $\mathbb{S}_{2^4}$. Pseudo-extensions are based on the notion of pseudo-irreducibility, derived from a characterisation of polynomial irreducibility in finite fields. 
This allows us to define in the product of semifields, a novel function as close
as possible to the inverse function in a finite field. We call it a
pseudo-inverse and use it to build new \SBoxes. 
Many of the obtained \SBoxes have then very good evaluations on different
criterion for cryptographic resitance. Indeed, we obtained $120$ \SBoxes with
better scores than those of already known \SBoxes, including AES and
Camellia. 
We then used the same technique to generate $2684$ novel APN functions by mimicking the cube function. 

About bijective \SBoxes and APN functions, some more exhaustive search could be
done via associative variations
of Equations~(\ref{eq2}) and~(\ref{eq6}).
It could also be interesting to try to adapt to semifields other functions 
(bijective or not), like the ones described in~\cite[\S6]{Alvarez:2008:apn}. 
\bibliographystyle{plainurl}
\bibliography{semifieldsbox}

\end{document}